\documentclass[final]{article}

\usepackage{longtable}
\usepackage{multirow}
\usepackage{booktabs}
\usepackage{color}
\usepackage{graphicx}
\usepackage{amsmath}
\usepackage{amsfonts}

\newtheorem{theorem}{Theorem}[section]

\title{Modeling tumor cell heterogeneity and plasticity in adaptive therapy}

\author{Rui Yue\thanks{School of Mathematical Sciences, Center for Applied Mathematics, Tiangong University, Tianjin 300387, China}, Chenghang Li\thanks{School of Mathematical Sciences, Center for Applied Mathematics, Tiangong University, Tianjin 300387, China}, Jinzhi Lei\thanks{School of Mathematical Sciences, Center for Applied Mathematics, Tiangong University, Tianjin 300387 (Email: jzlei@tiangong.edu.cn).}}

\begin{document}
\maketitle

\begin{abstract}
Adaptive therapy (AT) is designed to postpone the emergence of drug resistance by exploiting evolutionary competition among tumor subclones. Most mathematical models of AT assume a binary population structure of drug-sensitive and drug-resistant cells, which neglects the continuous nature of phenotypic plasticity. In this study, we propose a mathematical model that integrates a continuous drug susceptibility index with a probabilistic inheritance function to describe clonal dynamics under therapy. The resulting integro-differential system generalizes traditional two-type competition models and captures both heterogeneity and plasticity of tumor cells. Analytical and numerical studies show that (i) continuous therapy drives rapid expansion of resistant clones, (ii) adaptive therapy maintains long-term tumor control by dynamically regulating sensitive populations, and (iii) high phenotypic plasticity accelerates phenotype switching, leading to earlier tumor relapse following continuous therapy. These results identify critical parameter regimes where adaptive therapy outperforms fixed regimens and highlight the essential role of plasticity in shaping treatment outcomes. The proposed framework provides a more realistic mathematical foundation for the design of clinically relevant adaptive therapy strategies.
\end{abstract}

\textbf{MSCcodes:\ \ } 92C50, 34K99

\textbf{Keywords:}\ \ adaptive therapy, integro-differential equation, plasticity, heterogeneity


\section{Introduction}
\label{sec1}

Cancer remains a leading cause of death worldwide \cite{Sung_2021}. Standard treatment typically follows the maximum tolerated dose (MTD) regimen, which uses continuous high-intensity therapy to shrink tumors. Although effective in the short term, MTD often accelerates resistance by promoting the outgrowth of resistant subclones \cite{Greaves_2012}. Adaptive therapy (AT) offers an alternative treatment strategy by leveraging competition between drug-sensitive and resistant cells. It adjusts drug doses based on tumor response, aiming to sustain a sensitive cell population that suppresses resistant ones \cite{Gatenby_2009}. Nevertheless, determining how to design and optimize adaptive therapy schedules remains a central challenge as treatment outcomes are strongly influenced by tumor heterogeneity, phenotypic plasticity, and other evolutionary dynamics. 

The rationale of AT is rooted in evolutionary dynamics of tumor cell populations, where treatment is designed to exploit competitive interactions between sensitive and resistant clones \cite{Gatenby_2020, Aktipis_2013}. In contrast to MTD regimens that impose strong selective pressure and favor resistant phenotypes \cite{Gallaher_2018}, AT applies treatment intermittently or adaptively to preserve a sufficient pool of sensitive cells. These cells act as competitors that suppress the growth of resistant subpopulations, a mechanism consistent with predictions from evolutionary game theory \cite{Gillies_2012, Zhang_2017}. By harnessing this competitor-exploiter relationship, AT converts tumor heterogeneity from a clinical obstacle into a therapeutic advantage \cite{Kim_2021,Liu_2022}. Early clinical studies and pilot trials have provided encouraging evidence that AT can prolong time to progression and reduce cumulative drug exposure compared with standard MTD approaches \cite{Gatenby_2009,Zhang:2022aa}.

Mathematical modeling has become a powerful tool for analyzing and optimizing AT. By capturing tumor-drug-evolutionary dynamics, the models provide a quantitative framework to compare treatment schedules and identify conditions that prolong tumor control. Early approaches focused on population-level competition models, showing that adaptive or intermittent dosing can delay resistance relative to continuous treatment \cite{Zhang_2017, Kim_2021}. More recently, optimization-based methods have been applied to determine dosing thresholds and adaptive schedules that maximize progression-free survival while minimizing cumulative drug exposure \cite{Liu_2022, Wang_2025}. Other studies have incorporated spatial structure and pharmacokinetic-pharmacodynamic interactions, revealing how tumor heterogeneity and microenvironmental constraints shape treatment outcomes \cite{Gallaher_2018, McGehee_2024}. 

Beyond these mathematical models, several extensions have been proposed to capture additional biological mechanisms. Kam et al. \cite{Kam_2014} introduced a systematic strategy based on evolutionary dynamics, demonstrating that dynamically adjusted regimens can sustain tumor control in metastatic breast cancer. Gevertz et al. \cite{Gevertz_2025} developed a two-population model and showed that moderate-dose, beat-paced chemotherapy prolongs treatment efficacy. Wang et al. \cite{Wang_2024} integrated heterogeneity, pharmacodynamics, and immune interactions into a prostate cancer model, illustrating the combined role of intermittent dosing and immunomodulation. In parallel, West et al. \cite{West_2020} proposed a multidrug evolutionary game-theoretic framework, highlighting tumor type-dependent outcomes and the limitations of AT in aggressive malignancies.

Together, these studies underscore the critical role of mathematical models in guiding the design of adaptive therapy. However, most existing frameworks simplify tumor populations into two discrete categories---drug-sensitive versus drug-resistant cells. This binary assumption overlooks phenotypic plasticity, a continuous and dynamic process through which cells can switch between states and generate heterogeneous responses to therapy. Capturing this dimension of tumor evolution is essential for accurately predicting treatment outcomes and optimizing therapeutic strategies, and forms the focus of the present study.

To address this gap, we develop a novel framework that introduces a continuous drug susceptibility index to characterize heterogeneous treatment responses. By incorporating phenotypic plasticity through a probabilistic inheritance mechanism, the model captures dynamic phenotype switching across cell generations, providing a more realistic description of tumor evolution. Within this framework, we compare continuous therapy (CT) and adaptive therapy (AT), and systematically evaluate how plasticity affects therapy outcomes. Our simulations reveal that while CT may result in tumor recurrence due to the expansion of drug-resistant cells, AT can maintain long-term tumor control by dynamically balancing sensitive and resistant populations. Moreover, plasticity is shown to play a decisive role: high plasticity accelerates tumor relapse under CT, alters dosing requirements under AT, and can compromise treatment duration when cycle timing is suboptimal. This study provides a unified integro-differential framework that integrates tumor heterogeneity, phenotypic plasticity, and adaptive treatment response---offering a more realistic mathematical foundation for designing clinically relevant adaptive therapy strategies.

The rest of this paper is structured as follows. Section \ref{sec2} presents the formulation of the kinetic model with phenotypic plasticity. Section \ref{sec3} reports theoretical and simulation analyses of tumor dynamics under different treatment strategies, including cohort-based studies of a virtual patient cohort. Section \ref{sec4} concludes with a summary of findings, limitations, and perspectives for future research.

\section{Mathematical formulation}
\label{sec2}

\subsection{Homogeneous G0 cell cycle model}

Cancer is a disease characterized by the abnormal proliferation of cells. A classical framework for modeling proliferative dynamics is the G0 cell cycle model, originally proposed in the 1970s \cite{Burns_1970}. In this model, the cell cycle is divided into two major states: a resting phase (G0) and a proliferating phase including G1, S, G2, and mitosis (Fig. \ref{fig:1}a). During the proliferating phase, a cell may undergo apoptosis with a constant rate $\mu$, or successfully complete mitosis and produce two daughter cells after a duration $\tau$. In the resting phase, cells may either irreversibly differentiate into downstream lineages at a rate $\kappa$ or re-enter the proliferating pool at a rate $\beta$. Importantly, $\beta$ is often regulated by the abundance of resting-phase cells through cytokine-mediated feedback, reflecting a balance between self-renewal and differentiation.

\begin{figure}[htbp]
\centering
\includegraphics[width=12cm]{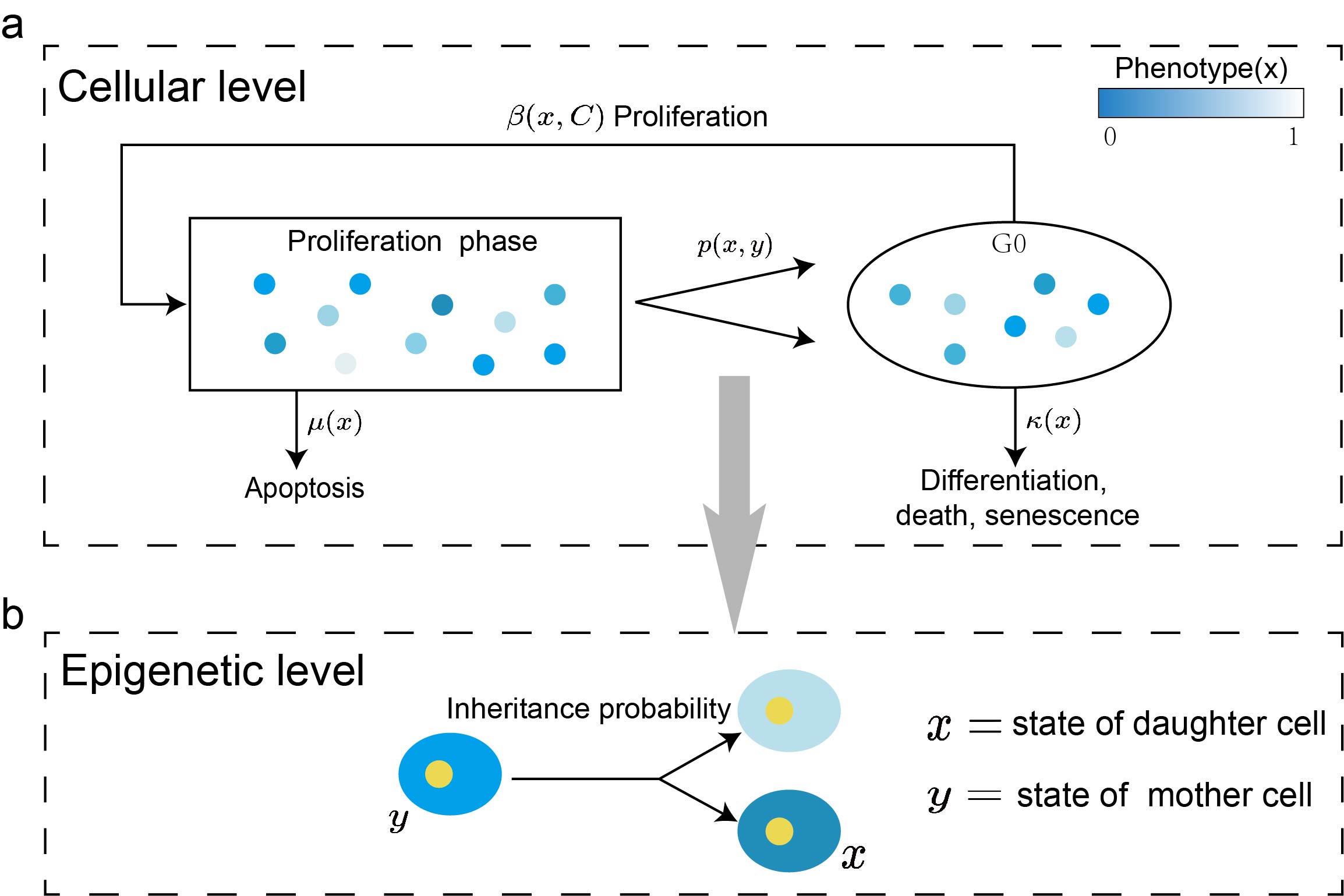}
\caption{Mechanistic diagram. (a) The G0 cell cycle model for the balance between resting and proliferating states. (b) At the epigenetic level, the probability of transmitting drug sensitivity from mother to daughter cells is described by the inheritance function $p(x, y)$.}
\label{fig:1}
\end{figure}

The above assumptions lead to the following age-structured equation \cite{Lei_2020}:
\begin{equation}
\label{eq:ag}
\left\{
\begin{aligned}
&\dfrac{\partial s(t, a)}{\partial t} + \dfrac{\partial s(t, a)}{\partial a} = -\mu s(t, a),\quad t > 0, 0<a < \tau,\\
&\dfrac{d c}{d t} = 2 s(t, \tau) - (\beta(c) + \kappa)c, t > 0,
\end{aligned}
\right.
\end{equation}
where $s(t, a)$ is the number of cells at time $t$ with age $a$ in the proliferating phase, and $c(t)$ is the number of cells in the resting phase. The boundary condition at age $a = 0$ is given by
$$
s(t, 0) = \beta(c(t))c(t).
$$

The proliferation rate $\beta$ depends on the cell number through a Hill function
\begin{equation}
\label{eq:2}
\beta(c) = \beta_0 \frac{1}{1 + (c/\theta)^m}.
\end{equation}
The Hill function represents the overall effect of feedback regulation from resting-phase cells. Biologically, the self-renewal capacity of a cell is regulated by microenvironmental factors (e.g., growth factors and cytokines) and intracellular signaling pathways. The Hill functional dependence can be derived from simple assumptions on the receptor-ligand binding dynamics \cite{Bernard_2003,Lei_2020}.  

Integrating the first equation in \eqref{eq:ag} using the method of characteristics yields a delay differential equation (DDE) \cite{Burns_1970,Mackey:1978}:
\begin{equation}
\label{eq:1}
\dfrac{d c}{d t} = -(\beta(c(t)) + \kappa) c(t) + 2 e^{-\mu \tau} \beta(c(t-\tau)) c(t-\tau).
\end{equation}
Equation \eqref{eq:1} describes the dynamics of homogeneous tumor cell growth. 

\subsection{Heterogeneous tumor cell sensitivity}

To incorporate treatment heterogeneity, we introduce a continuous variable $x\in [0, 1]$ representing drug sensitivity, with larger $x$ values corresponding to higher sensitivity. Let $C(t, x)$ denote the density of cells with sensitivity $x$ at time $t$. The total tumor population is then 
$$
c(t) = \int_0^1 C(t, x) d x.
$$

Drug-sensitive cells typically proliferate faster, so tumors are dominated by sensitive subclones before therapy. Under drug pressure, however, resistant cells gain a survival advantage. Hence, both the proliferation rate $\beta$ and death rate $\mu$ are dependent on the sensitivity index $x$, and the cycle duration $\tau(x)$ may also vary with sensitivity.

Biologically, sensitivity reflects the overall impact of drug sensitivity based on a cell's molecular profile, which includes the expression of drug targets, efflux pumps, and stress response pathways. These features may fluctuate during cell division, giving rise to phenotypic plasticity. To phenomenologically describe the phenotypic plasticity, we introduce an inheritance probability function $p(x, y)$, representing the conditional probability that a daughter cell has sensitivity $x$ given that its mother cell had sensitivity $y$ (Fig. \ref{fig:1}b):
$$
p(x, y) = P(\mbox{daughter cell} = x \vert \mbox{mother cell} = y).
$$

This leads to a modified age-structured equation:
\begin{equation}
\label{eq:ag2}
\left\{
\begin{aligned}
&\dfrac{\partial s(t, a, x)}{\partial t} + \dfrac{\partial s(t, a, x)}{\partial a} = -\mu(x) s(t, a, x), \quad t > 0,\ 0< a < \tau(x),\\
&\dfrac{\partial C(t, x)}{\partial t} = 2 \int_0^1 s(t, \tau(y), y) d y - (\beta(x, C(t, x)) + \kappa(x)) C(t, x),\quad t > 0.
\end{aligned}
\right.
\end{equation}
Applying the method of characteristics, we obtain an integro-differential equation \cite{Lei_2020, Lei_2020gd}:
\begin{equation}
\label{eq:4}
\begin{aligned}
\frac{\partial C(t, x)}{\partial t} &= -(\beta(x,C(t, x))+\kappa(x))C(t, x)\\
&{}+2\int_{0}^{1}e^{-\mu(y)\tau(y)}\beta(y, C(t-\tau(y), y))C(t-\tau(y), y)p(x,y)dy.
\end{aligned}
\end{equation}
Equation \eqref{eq:4} integrates stem cell regeneration and the transition of sensitivities during cell division, which form the basis of our analysis of adaptive therapy with tumor cell plasticity.

For comparison with classical ODE models, we simplify \eqref{eq:4} by omitting the explicit delay $\tau$ and absorbing proliferative-phase death into $\mu(x)$, giving
\begin{equation}
\label{eq:5}
\frac{\partial C(t, x)}{\partial t} = -(\beta(x,C(t, x))+\kappa)C(t, x)+2\int_{0}^{1}e^{-\mu(y)}\beta(y, C(t, y))C(t, y)p(x,y)dy.
\end{equation} 

\subsection{Subpopulational competition}
To distinguish between drug-sensitive and drug-resistant subgroups based on the heterogeneity in drug sensitivity, we define a logistic function $\xi(x)$ that specifies the probability of a cell with sensitivity $x$ belonging to the resistant group:
\begin{equation}
\label{eq:xi}
\xi(x) = \frac{1}{1+\exp((x-k_1)/s_1)}.
\end{equation}
Here, the parameter $k_1$ is the location parameter, representing the threshold for distinguishing between drug-sensitive phenotypes ($x > k_1$) and drug-resistant phenotypes ($x < k_1$), and $s_1$ is the scale parameter that controls the steepness of the logistic curve. Therefore,    
$$
c_0(t) = \int_0^1 C(t, x) \xi(x) d x,  \quad c_1(t) = \int_0^1 C(t, x) (1 - \xi(x)) d x
$$
represent resistant and sensitive populations, respectively. 

The two groups of cells compete for resources. Before treatment, sensitive cells dominate and suppress resistant clones. Under prolonged therapy, resistant cells expand and suppress sensitive clones. To capture this ecological interaction, we modify the proliferation rate as:
$$
\beta(x, C(t, x)) = \beta_0 \dfrac{1}{1 + \left(\frac{c_0(t)}{\theta/a_0(x)}\right)^m + \left(\frac{c_1(t)}{\theta/a_1(x)}\right)^m},
$$ 
where $a_0(x)$ and $a_1(x)$ are the competition coefficients for resistant and sensitive cells, respectively, defined by Hill functions:
\begin{equation}
\label{eq:9}
\begin{aligned}
&a_0(x) = \rho_0 + (1-\rho_0)\dfrac{x^8}{k_1^8 + x^8},\\
&a_1(x) = \rho_1 + (1-\rho_1) \dfrac{k_1^8}{k_1^8 + x^8}.
\end{aligned}
\end{equation}
Here, $k_1$, as referenced in \eqref{eq:xi}, serves as the threshold for distinguishing competition interactions between sensitive and resistant cells. The parameters $\rho_0$ and $\rho_1$ define the baseline competition coefficients. Thus, resistant cells (small $x$) are mainly repressed by sensitive competitors, whereas sensitive cells (large $x$) are increasingly suppressed as resistant clones accumulate. 

The governing system is then:
\begin{equation}
\label{eq:dio}
\left\{
\begin{aligned}
\dfrac{\partial C(t, x)}{\partial t} &= -(\beta(x, C) + \kappa) C(t, x) + 2 \int_0^1 e^{-\mu(t, y)} \beta(y, C) C(t, y) p(x, y) d y,\\
\beta(x,C) &= \beta_0 \dfrac{1}{1+(\frac{c_0(t)}{\theta/a_0(x)})^m+(\frac{c_1(t)}{\theta/a_1(x)})^m},\\
c_0(t) &= \int_0^1 C(t, x) \xi(x) d x,\\
c_1(t) &= \int_0^1 C(t, x) (1- \xi(x)) d x.
\end{aligned} 
\right.
\end{equation}
Here, $\mu(t, x)$ accounts for the time-dependent treatment protocol.

\subsection{Cell plasticity}

The genetic function $p(x,y)$ describes the stochastic fluctuations of sensitivity traits during cell division, reflecting the imperfect transmission of molecular and epigenetic states from parent to daughter cells. Although the underlying biochemical mechanisms are highly complex, employing a probabilistic description provides a manageable approximation framework \cite{Lei_2020gd,Zhang:2021gd}. 

Since $x \in [0, 1]$, we represent $p(x, y)$ by a beta-distribution density 
\begin{equation}
\begin{aligned}
p(x,y)=\dfrac{x^{a(y)-1}(1-x)^{b(y)-1}}{B(a(y),b(y))},
B(a,b)=\dfrac{\Gamma(a)\Gamma(b)}{\Gamma(a+b)},
\end{aligned}
\end{equation}
where $\Gamma(\cdot)$ is the gamma function, and $a(y), b(y)$ are shape parameters depending on mother-cell sensitivity $y$. The beta distribution is used because it is ideal for bounded domains and can effectively represent various inheritance patterns, ranging from conservative transmission to random fluctuations, by adjusting its shape parameters. Additionally, different shapes of the probability function can be obtained by varying these shape parameters \cite{Lei_2020,Huang:2024aa}.

To determine the shape parameters for a beta distribution, we note that the mean and variance are
\begin{equation}
\mathrm{E}[X] = \dfrac{a}{a + b}, \quad \mathrm{Var}[X] = \dfrac{a b}{(a + b)^2 (a + b - 1)}.
\end{equation}
We introduce two functions $\phi(y)$ and $\eta(y)$ to express the conditional expectation and variance as
\begin{equation}
\label{eq:11}
\mathrm{E}[x \vert y] =\phi(y),\quad \mathrm{Var}[x \vert y]=\dfrac{1}{1+\eta(y)}\phi(y)(1-\phi(y)).
\end{equation}
This gives
$$
a(y) = \eta(y) \phi(y),\quad b(y) = \eta(y) (1 - \phi(y)).
$$
Here, $\phi(y)$ represents the expected value of offspring sensitivity given parental sensitivity $y$, which reflects the maintenance of phenotypic stability through epigenetics. Meanwhile, $\eta(y)$ regulates the dispersion of this mean, capturing the randomness induced by gene expression noise and microenvironmental fluctuations. 

For simplicity, $\eta(y)$ is often taken as a constant, while $\phi(y)$ is typically given by an increasing Hill function \cite{Lei_2020gd,Li:2025kf}. Here, we define $\phi(y)$ as
\begin{equation}
\phi(y) = \phi_0+\phi_1\frac{y^2}{k_2^2+y^2}, 
\end{equation}
where $k_2$ is the half-saturation coefficient. The parameters are taken so that $\eta > 0$ and $0< \phi(y) < 1$.

Biologically, the functions $\phi(y)$ and $\eta(y)$ have clear biological interpretations and can, in principle, be determined through experimental observations by tracking cell cycling and single-cell sequencing \cite{Quinn:2021aa,Denoth-Lippuner:2021aa}. We can also derive the function $\phi(y)$ based on in silico dynamics of cell regeneration \cite{Li:2025kf}. In this study, the parameter values within these two functions were selected based on the following criteria: (i) consistency with biologically plausible ranges reported in the literature for similar phenotype conversion models \cite{Lei_2020gd, Zhang:2021gd, Liang:2024kg, Wang:2025hl}; and (ii) ensuring that the resulting dynamics can replicate clinically observed behavior, specifically tumors initially exhibiting sensitivity to treatment but eventually developing resistance under sustained therapy. 

By constructing the probability kernel function $p(x,y)$ using a beta distribution, this framework translates molecular-level variation into population-level phenotypic plasticity. It simultaneously characterizes stable phenotypic transmission and random transitions between sensitive and resistant states. This model framework has been applied in studies of tumor cell immune escape \cite{Zhang:2021gd} and cancer drug resistance \cite{Wang:2025hl}.

It is important to note that in this study, the inheritance function $p(x,y)$ is assumed to remain unchanged throughout treatment. This assumption simplifies the complexities of the real situation by disregarding the effect of treatment on phenotypic plasticity. Alternatively, to account for such effects, the functions $\phi(y)$ could be made dependent on drug administration; however, exploring this further is beyond the scope of this study.

In summary, our formulation allows us to tune the balance between accurate inheritance (low variance) and stochastic switching (high variance), thereby connecting molecular noise during division to macroscopic evolution.

\subsection{Treatment protocol} 
Therapeutic interventions are modeled as a time-dependent increase in cell death rate, expressed as:
$$
\mu(t, x) = \mu_0 + \mu_D(t, x),
$$
where $\mu_0$ represents the baseline death rate, and $\mu_D(t, x)$ denotes the drug-induced component that depends on the dosage and sensitivity of the cells. Since highly sensitive cells are more affected by the treatment, we describe this dose-response relationship using a Hill function:
\begin{equation}
\label{eq:7}
\mu_D(x)=\mu_1(t)\dfrac{x^2}{x^2+k_1^2}.
\end{equation}
In this equation, $k_1$, as referenced in \eqref{eq:xi} and \eqref{eq:9}, serves as the threshold to differentiate the drug effects on sensitive versus resistant cells. The term $\mu_1(t)$ represents the dosing schedule. This Hill-type function captures the pharmacodynamic relationship between drug concentration and killing efficacy, ensuring that sensitive cells, which are characterized by higher values of $x$, experience greater cell death than resistant cells.

The heterogeneous framework with phenotypic plasticity serves as the foundation of our analysis. In the next subsection, we show that under specific assumptions, the model reduces to classical formulations, which will later be used as benchmarks for comparison. 

\subsection{ODE model}
\label{sec2.3}

To provide a benchmark for comparison, we simplify the heterogeneous model that incorporates phenotypic plasticity into a more straightforward ordinary differential equation (ODE) system. This reduced formulation aligns with the classical two-population framework commonly used in previous studies of adaptive therapy. We eliminate plasticity by setting the inheritance function $p(x, y) = \delta(x-y)$, meaning that daughter cells inherit the same drug sensitivity as their mothers without variation. We then group the tumor population into two distinct subgroups: drug-resistant $c_0(t)$ and drug-sensitive cells $c_1(t)$. For simplicity, we also assume that the differentiation rate $\kappa(x)$ is a constant, expressed as $\kappa(x) = \kappa$, regardless of sensitivity. 

To capture subgroup-specific competition, the competition coefficients $a_0(x)$ and $a_1(x)$ are defined as piecewise constants:
\begin{equation}
\label{eq:13}
a_0(x)=
\begin{cases}
1, & x>k_1\\
\rho_0, & x<k_1
\end{cases} ,\quad
a_1(x)=
\begin{cases}
\rho_1, & x>k_1\\
1, & x<k_1
\end{cases}.
\end{equation}
Biologically, this means that resistant cells compete less effectively when surrounded by sensitive cells (scaled by $\rho_0$), while sensitive cells become increasingly suppressed as resistant subclones expand (scaled by $\rho_1$).

The therapeutic effect is modeled by a piecewise dose-response killing function:
\begin{equation}
\label{eq:12}
\mu_D(t, x)=
\begin{cases}
\mu_1(t), &  x>k_1\\
 0 , &  x<k_1
\end{cases},
\end{equation}
which assumes that only the sensitive population is effectively targeted by treatment, whereas resistant cells remain unaffected. In practice, this reflects the clinical scenario in which resistant clones evade drug-induced apoptosis through mechanisms such as efflux pumps, mutations in drug targets, or altered signaling pathways.

 Under these assumptions, substituting the piecewise competition terms and dose-response function into the general integro-differential system \eqref{eq:dio}, we obtain the following ODE system:
\begin{equation}
\label{eq:17}
\left\{
\begin{aligned}
\dfrac{\mathrm{d} c_0}{\mathrm{d} t} &= \left(\beta_0 \dfrac{(2e^{-\mu_0}-1)}{1 + (\dfrac{c_0}{\theta/\rho_0})^m + (\dfrac{c_1}{\theta})^m}-\kappa\right)c_0 ,\\
\dfrac{\mathrm{d} c_1}{\mathrm{d} t} &= \left(\beta_0 \dfrac{(2e^{-(\mu_0+\mu_1(t))}-1)}{1 + (\dfrac{c_0}{\theta})^m + (\dfrac{c_1}{\theta/\rho_1})^m}-\kappa\right)c_1.
\end{aligned}
\right.
\end{equation}

This system illustrates ecological competition between resistant and sensitive subclones, in which each population suppresses the other's growth. Before treatment, sensitive cells usually dominate due to their higher proliferation rates. However, with prolonged therapy, resistant clones gradually emerge and suppress the sensitive population, ultimately leading to treatment failure. This model aligns with classical competition frameworks proposed by Gatenby and colleagues \cite{Gatenby:1991oh, Gatenby_2009} and has been extensively used to investigate adaptive therapy. In the following analysis, we compare this ODE model with our heterogeneous plasticity framework to highlight the impact of phenotypic switching on treatment outcomes.

\section{Results}
\label{sec3}
\subsection{Kinetic analysis and competition dynamics}
\label{sec3.1}

\subsubsection*{Mathematical analysis}

We begin by analyzing the classical ODE model \eqref{eq:17} to understand the dynamics of tumor cells under continuous treatment. To focus on equilibrium behavior, we assume a constant drug effect by setting $\mu_1(t) = \mu_1$. Furthermore, we introduce normalized state variables $x_0$ and $x_1$ to represent the scaled densities of resistant and sensitive populations, respectively:
$$x_0(t)=\left(\frac{c_0(t)}{\theta}\right)^m, \quad x_1(t)=\left(\frac{c_1(t)}{\theta}\right)^m.$$ 
For convenience, we define the following parameters: 
$$\alpha_0=\frac{\beta_0 (2e^{-\mu_0} - 1)}{\kappa},\qquad \alpha_1 = \frac{\beta_0 (2e^{-(\mu_0 + \mu_1)} - 1)}{\kappa},$$
where $\alpha_0$ and $\alpha_1$ characterize the effective proliferative potential of resistant and sensitive cells, respectively, relative to differentiation loss $\kappa$. With these substitutions, the system \eqref{eq:17} can be rewritten as
\begin{equation}
\label{eq:18}
\left\{
\begin{aligned}
\dot{x}_0&= m \kappa \left(\frac{\alpha_0}{1 + \rho_0^{m}x_0 + x_1}-1\right) x_0,\\
\dot{x}_1&= m \kappa \left(\frac{\alpha_1}{1 + x_0 + \rho_1^{m}x_1}-1\right) x_1.
\end{aligned}
\right.
\end{equation}

For biological relevance, we impose the following parameter restrictions:
\begin{equation}
\label{eq:cond}
\alpha_0 > 0,\ \alpha_0 > \alpha_1,\  1> \rho_0,\ \rho_1 > 0.
\end{equation}
That is, resistant cells ($c_0$) retain a higher proliferation advantage under therapy, but both populations are subject to competition-mediated inhibition.

The system admits several steady states, representing different ecological outcomes of tumor evolution:
$$
\begin{aligned}
 E_0 &= (0,0),\ E_1 = (\dfrac{\alpha_0-1}{\rho_0^m}, 0),\ E_2 = (0 , \dfrac{\alpha_1-1}{\rho_1^m}),\\
 E_3 &= \left(\dfrac{(\alpha_1-1)-\rho_1^m(\alpha_0-1)}{1-(\rho_0 \rho_1)^m} , \dfrac{(\alpha_0-1)-\rho_0^m(\alpha_1-1)}{1-(\rho_0 \rho_1)^m}\right).
\end{aligned}
$$

\begin{theorem}
\label{thm:1}
Consider equation \eqref{eq:18} under the restrictions \eqref{eq:cond}. The following holds:
\begin{enumerate}
\item[(1)] When $1 > \alpha_0 \geq \alpha_1$, the only non-negative steady state is $E_0$, which is locally asymptotically stable.
\item[(2)] When $\alpha_0 > 1 > \alpha_1$, there exist two steady states $E_0$ and $E_1$. Among them, $E_0$ is unstable, while $E_1$ (resistant-dominant) is locally asymptotically stable.
\item[(3)] When $\alpha_0 \geq \alpha_1 > 1$, three steady states $E_0$, $E_1$, and $E_2$ exist. If additionally, 
\begin{equation}
\label{eq:cond2}
\rho_1^m < \frac{\alpha_1  - 1}{\alpha_0 - 1},
\end{equation}
a coexistence state $E_3$ also appears. In these cases, $E_0$ and $E_3$ are unstable, $E_1$ is locally asymptotically stable, and $E_2$ is stable only when \eqref{eq:cond2} is satisfied.
\end{enumerate}
\end{theorem}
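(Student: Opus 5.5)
The plan is to verify each claim by locating the non-negative zeros of the right-hand side of \eqref{eq:18} and then linearizing there. Write $\dot x_i = m\kappa f_i(x_0,x_1)\,x_i$ with
$$
f_0=\frac{\alpha_0}{1+\rho_0^m x_0+x_1}-1,\qquad f_1=\frac{\alpha_1}{1+x_0+\rho_1^m x_1}-1 .
$$

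\textbf{Existence.} The steady states are obtained by setting either $x_i=0$ or $f_i=0$ for each $i$; this produces exactly $E_0,\dots,E_3$.

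\begin{enumerate}
\item[(a)] $E_1$ is non-negative, and distinct from $E_0$, precisely when $\alpha_0>1$.
\item[(b)] $E_2$ is non-negative, and distinct from $E_0$, precisely when $\alpha_1>1$.
\item[(c)] For $E_3$ I will read the restriction \eqref{eq:cond} as $0<\rho_0,\rho_1<1$, so that $1-(\rho_0\rho_1)^m>0$. Then positivity of the second coordinate is automatic in case (3): $\alpha_0-1>\rho_0^m(\alpha_1-1)$ because $\alpha_0\ge\alpha_1$ and $\rho_0<1$. Positivity of the first coordinate is exactly \eqref{eq:cond2}.
\end{enumerate}

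This sorts the steady states into the three cases of the theorem. In case (1), both $E_1$ and $E_2$ would have negative coordinates, so only $E_0$ remains.

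\textbf{Linearization.} The Jacobian is
$$
J=m\kappa\begin{pmatrix} f_0+x_0\partial_{x_0}f_0 & x_0\partial_{x_1}f_0\\ x_1\partial_{x_0}f_1 & f_1+x_1\partial_{x_1}f_1\end{pmatrix}.
$$
I evaluate it at each steady state in turn.

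\begin{enumerate}
\item[(a)] At $E_0$, $J=m\kappa\,\mathrm{diag}(\alpha_0-1,\alpha_1-1)$. Hence $E_0$ is stable in case (1) and unstable in cases (2) and (3).
\item[(b)] At $E_1$, $J$ is upper triangular with diagonal entries $-m\kappa(\alpha_0-1)/\alpha_0<0$ and $m\kappa\bigl(\alpha_1/(1+(\alpha_0-1)/\rho_0^m)-1\bigr)$. Since $\rho_0<1$ and $\alpha_0>1$, we have $1+(\alpha_0-1)/\rho_0^m>\alpha_0\ge\alpha_1$. So the second entry is negative and $E_1$ is locally asymptotically stable in cases (2) and (3).
\item[(c)] At $E_2$, $J$ is lower triangular with diagonal entries $-m\kappa(\alpha_1-1)/\alpha_1<0$ and $m\kappa\bigl(\alpha_0/(1+(\alpha_1-1)/\rho_1^m)-1\bigr)$. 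The second entry is negative if and only if $\alpha_0-1<(\alpha_1-1)/\rho_1^m$, which is exactly \eqref{eq:cond2}.
\item[(d)] At $E_3$ we have $f_0=f_1=0$, and the denominators there equal $\alpha_0$ and $\alpha_1$. This gives
$$
J=-m\kappa\begin{pmatrix} \rho_0^m x_0/\alpha_0 & x_0/\alpha_0\\ x_1/\alpha_1 & \rho_1^m x_1/\alpha_1\end{pmatrix},\qquad \det J=\frac{m^2\kappa^2x_0x_1}{\alpha_0\alpha_1}\bigl((\rho_0\rho_1)^m-1\bigr)<0 .
$$
So $E_3$ is a saddle, and therefore unstable, whenever it is positive.
\end{enumerate}

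\textbf{Main obstacle.} The computations are routine; the work lies in handling the borderline and degenerate parameter cases correctly. In case (1), the hypothesis should be read as $\alpha_0<1$. When $\alpha_0=\alpha_1$ in case (3), the inequality $1+(\alpha_0-1)/\rho_0^m>\alpha_0\ge\alpha_1$ still holds strictly because $\rho_0<1$, so $E_1$ remains hyperbolic and stable. The sign of $1-(\rho_0\rho_1)^m$ must be justified explicitly from $\rho_0,\rho_1<1$. I would also remark that when \eqref{eq:cond2} fails (strict reverse inequality), $E_2$ has a positive eigenvalue and is unstable, consistent with the statement that it is stable only under \eqref{eq:cond2}.
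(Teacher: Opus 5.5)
Your proposal is correct. For $E_0$, $E_1$ and $E_2$ it follows the paper's argument: locate the equilibria, evaluate the Jacobian, and read the eigenvalues off a diagonal or triangular matrix. Your inequality $1+(\alpha_0-1)/\rho_0^m>\alpha_0\ge\alpha_1$ is the paper's observation $\alpha_0-1>\rho_0^m(\alpha_1-1)$ in another form.

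The real difference is at $E_3$, and there your route is the sound one.

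\begin{itemize}
\item The paper writes the diagonal entries of $J(E_3)$ as $+m\kappa\rho_0^m x_0/\alpha_0$ and $+m\kappa\rho_1^m x_1/\alpha_1$. It then concludes instability from $\mathrm{tr}\,J(E_3)>0$.
\item This is a sign slip. At $E_3$ we have $\alpha_0=1+\rho_0^m x_0+x_1$, so the $(1,1)$ entry is $m\kappa\bigl(\frac{1+x_1}{\alpha_0}-1\bigr)=-m\kappa\rho_0^m x_0/\alpha_0$. The $(2,2)$ entry has the same sign for the same reason.
\item So the trace is actually negative and cannot give instability. The paper's conclusion is right, but its justification is not.
\end{itemize}

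Your matrix is the correct one. The determinant $\det J(E_3)=m^2\kappa^2 x_0x_1\bigl((\rho_0\rho_1)^m-1\bigr)/(\alpha_0\alpha_1)<0$ is what makes $E_3$ a saddle. This also shows that the instability hinges entirely on $(\rho_0\rho_1)^m<1$. If $\rho_0\rho_1>1$, a positive $E_3$ would have negative trace and positive determinant, and so would be stable. Your reading of \eqref{eq:cond} as $0<\rho_0,\rho_1<1$ is therefore essential, not cosmetic.

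One step is missing. In cases (1) and (2) you discard only $E_1$ and $E_2$, but you must also rule out a non-negative $E_3$.

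\begin{itemize}
\item The denominator of $E_3$'s first coordinate, $1-(\rho_0\rho_1)^m$, is positive, so its sign is that of the numerator $(\alpha_1-1)-\rho_1^m(\alpha_0-1)$.
\item In case (2) this numerator is negative, because both $\alpha_1-1$ and $-\rho_1^m(\alpha_0-1)$ are negative.
\item In case (1), using $\alpha_1\le\alpha_0<1$ and $\rho_1<1$, it is at most $-(1-\rho_1^m)(1-\alpha_0)<0$.
\end{itemize}

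The paper simply asserts this point without proof, so you should include it explicitly.
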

\textbf{Proof}\ \ 
The steady states of \eqref{eq:18} are given by the solutions of the following equation
\begin{equation}
\label{eq:19}
\left\{
\begin{aligned}
 m \kappa (\frac{\alpha_0}{1+\rho_0^{m}x_0+x_1}-1) x_0 = 0 \\
 m \kappa (\frac{\alpha_1}{1+x_0+\rho_1^{m}x_1}-1) x_1 = 0 
\end{aligned}
\right..
\end{equation}
It is simple to solve the above equations to find the solutions $E_0, E_1, E_2, E_3$.

It is straightforward that $E_1$ (or $E_2$) is a non-negative steady state only when $\alpha_0 >1$ (or $\alpha_1 > 1$). Moreover, $E_3$ is a non-negative steady state if and only if $\alpha_0 > \alpha_1 > 1$ and \eqref{eq:cond2} is satisfied. Consequently, the existence of non-negative steady states under various conditions is readily established.

To further study the stability of the steady states, the Jacobian matrix of \eqref{eq:18} at the state $E_i$ is given by 
\begin{equation}
\label{eq:23}
J(E_i) = m \kappa \left.\begin{pmatrix}
(\frac{\alpha_0(1 + x_1)}{(1 + \rho_0^m x_0 + x_1)^2}-1) & -\frac{\alpha_0 x_0}{(1 + \rho_0^m x_0 + x_1)^2}  \\
- \frac{\alpha_1 x_1}{(1 + x_0 + \rho_1^m x_1)^2} & (\frac{\alpha_1(1+x_0)}{(1 + x_0 + \rho_1^m x_1)^2}-1)
\end{pmatrix}\right\vert_{E_i}.
\end{equation}

For the state $E_0$, we have
$$
J(E_0) = m \kappa \begin{pmatrix}
(\alpha_0-1) & 0 \\
0 & (\alpha_1-1)
\end{pmatrix},
$$
which yields the eigenvalues $\alpha_0 -1$ and $\alpha_1 - 1$. Thus, $E_0$ is locally asymptotically stable only when $\alpha_0,\ \alpha_1 < 1$.

For the state $E_1$, we always have $\alpha_0 > \alpha_1 > 1$, and 
$$
J(E_1) =  m \kappa \left.\begin{pmatrix}
\frac{1}{\alpha_0} - 1 & - \frac{x_0}{\alpha_0}\\
0 & \frac{\alpha_1}{1 + x_0} - 1
\end{pmatrix}\right\vert_{x_0 = \frac{\alpha_0 - 1}{\rho_0^m}},
$$
which yields eigenvalues $\lambda_1 = \frac{1}{\alpha_0} -1$ and $\lambda_2 = \frac{\alpha_1}{1 + x_0} - 1$. Since we always have $\alpha_0 - 1 > \rho_0^m (\alpha_1 - 1)$, $E_1$ is local asymptotically stable only when $\alpha_0 > 1$.

For the state $E_2$, we have
$$
J(E_2) = m \kappa \left.\begin{pmatrix}
 \frac{\alpha_0}{1 + x_1} - 1 & 0 \\
 - \frac{x_1}{\alpha_1} & \frac{1}{\alpha_1} - 1
\end{pmatrix}\right\vert_{x_1 = \frac{\alpha_1 - 1}{\rho_1^m}},
$$
which yields eigenvalues $\lambda_1 = \frac{\alpha_0}{1 + x_1} -1$ and $\lambda_2 = \frac{1}{\alpha_1} - 1$. Thus, $E_2$ is locally asymptotically stable only when $\alpha_1 > 1$ and \eqref{eq:cond2} is satisfied.

For the state $E_3$, we have
$$
J(E_3) = m \kappa \left.\begin{pmatrix}
\frac{\rho_0^m x_0}{\alpha_0} & - \frac{x_0}{\alpha_0}\\
-\frac{x_1}{\alpha_1}  & \frac{\rho_1^m x_1}{\alpha_1}
\end{pmatrix}\right\vert_{E_3}.
$$
Since 
$$
\mathrm{tr}(J(E_3)) = \frac{\rho_0^m x_0}{\alpha_0} + \frac{\rho_1^m x_1}{\alpha_1} > 0,
$$
the state $E_3$ is always unstable. 

Summarizing the above arguments, the stabilities of the steady states are concluded.

Theorem \ref{thm:1} demonstrates how equilibrium outcomes are influenced by the balance among proliferation, drug-induced death, and competition. Biologically, state $E_2$ represents tumors dominated by sensitive cells in the absence of treatment, which aligns with the initial high responsiveness to therapy. However, with prolonged exposure to the drug, the system shifts toward state $E_1$, where resistant cells become dominant. The unstable coexistence state $E_3$ indicates that even when both subclones are present, competitive imbalances generally lead to one population prevailing over the other. 

In our simulations, we aim to mimic the clinically observed situation in which tumors are initially drug-sensitive but develop resistance during ongoing therapy. To achieve this, we have selected parameters such that $E_2$ is stable before treatment (with $\mu_1 = 0$), whereas $E_1$ becomes stable once therapy is initiated (with $\mu_1 > 0$). The default parameter values used in our study are listed in Table \ref{table:1}. 

\begin{table}[htbp]
\centering
\begin{tabular}{c c c l}
\hline
Parameter & Value & Unit & Description \\  
\hline
$\beta_0$ & $0.046$ & $\mathrm{h}^{-1}$ & Maximum proliferation rate of tumors\\
$\theta$ & $10^4$ & cells & Inhibitory threshold between cell populations\\
$\kappa$ & $0.005$ & $\mathrm{h}^{-1}$ & Differentiation rate of tumor cells \\
 $\mu_0$ & $0.08$ & $\mathrm{h}^{-1}$ & Baseline death rate of tumor cells  \\
 $k_1$ & $0.3$ & - & Threshold constant between drug-resistant and sensitive cells \\
 $s_1$ & $0.035$ & - & Scale parameter of the logistic curve\\ 
 $\rho_0$ & $0.2$ & - & Basal competition coefficient for resistant cells\\
 $\rho_1$ & $0.2$ & - & Basal competition coefficient for sensitive cells\\
 $\phi_0$ & $0.06$ & - & Coefficient to define the function $\phi(y)$\\
 $\phi_1$ & $0.9$ & - & Coefficient to define the function $\phi(y)$\\
 $k_2$ & $0.5$ & - & Coefficient to define the function $\phi(y)$\\
$\eta$ & $60$ & - & Variance regulator of inheritance distribution\\
\hline
\end{tabular}
\caption{Default parameter values.}
\label{table:1}
\end{table}

In the simulations below, we vary the cell death rate after treatment, denoted as $\mu_1$, to investigate how tumors respond to different drug dosages. With these values, we determine $\alpha_0 = 7.79$, which ensures that $E_1$ remains stable. Additionally, when $\mu_1$ exceeds $0.51$, we find that $\alpha_1 > 1$, leading to the violation of condition \eqref{eq:cond2}. As a result, $E_2$ becomes unstable under continuous therapy.

\subsubsection*{Treatment response of competing models} 
We next investigate the therapeutic response of the classical competition model, which includes both drug-sensitive and drug-resistant cell populations. This framework assumes that resistance is a fixed trait and that tumor evolution is primarily driven by selection and competitive release under treatment pressure. To understand how therapeutic intensity affects tumor control, we first examined tumor dynamics under continuous therapy (CT) across a range of the drug efficacy parameter $\mu_1$.

Without treatment, the sensitive cell population expanded exponentially, dominating the tumor mass. In contrast, the resistant cells remained at low levels and eventually became extinct due to their disadvantage in proliferation (Fig. \ref{fig:2}a). When continuous therapy began after the total tumor burden reached $10^5$ cells, the system displayed dose-dependent evolutionary dynamics. With moderate drug efficacy ($\mu_1 = 0.4$), the selective pressure was not strong enough to provide a significant advantage to the resistant subpopulation, and the tumor continued to be dominated by sensitive clones (Fig. \ref{fig:2}b). However, when drug efficacy increased to $\mu_1 = 0.6$, the sensitive population began to decline sharply in a dose-dependent manner, while resistant clones expanded due to competitive release. The reduced proliferative suppression from sensitive cells allowed the resistant population to grow (Fig. \ref{fig:2}c). Consequently, the total tumor burden initially decreased but then rebounded, ultimately leading to therapy-induced relapse.

\begin{figure}[htbp]
\centering
\includegraphics[width=12cm]{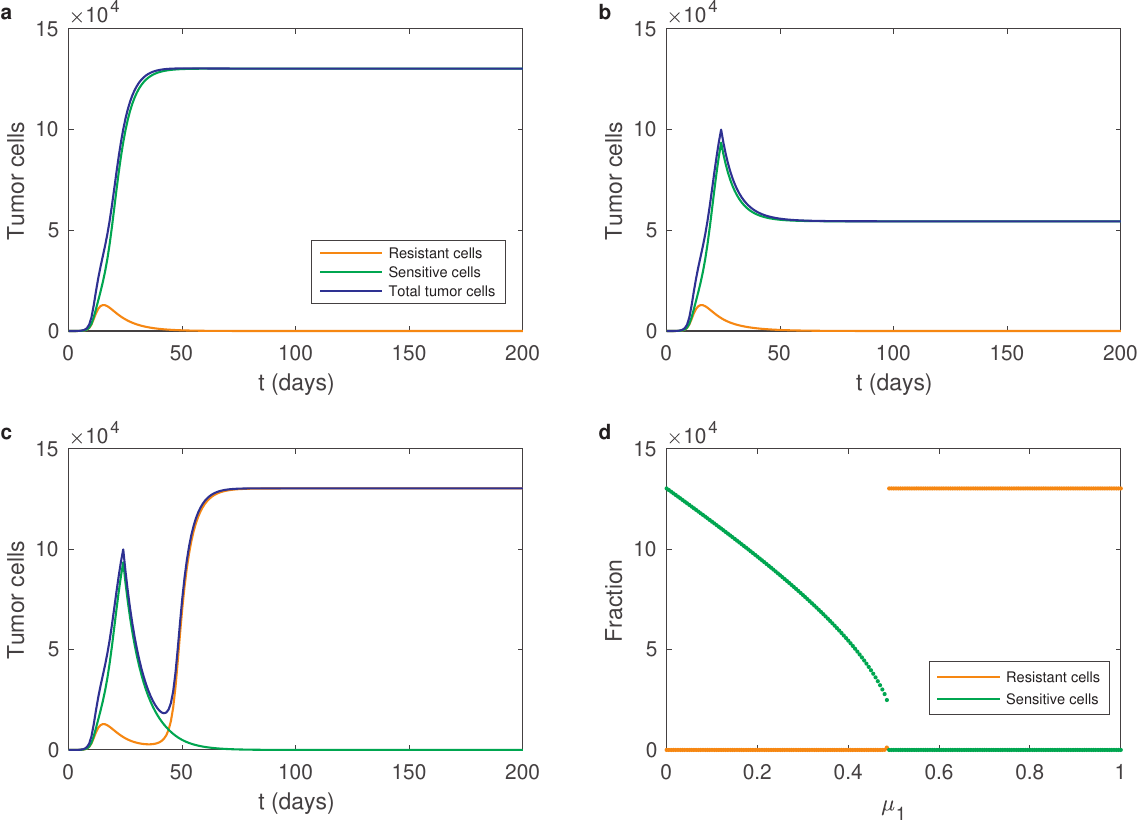}
\caption{Evolution of tumor cell dynamics in the classical competition model. (a) Changes in cell number without therapy. (b) Changes in cell number under continuous therapy with $\mu_1 = 0.4$ (c) Changes in cell number under continuous therapy with $\mu_1 = 0.6$. (d) Steady-state populations of drug-sensitive and resistant cells after continuous therapy, with $\mu_1$ varying from $0$ to $1$. The initial conditions for all simulations are $(c_0, c_1) = (1, 1.1)$.}
\label{fig:2}
\end{figure}

These results highlight a crucial paradox in continuous therapy: increasing the drug strength does not always lead to better tumor control. A systematic dose-response analysis (Fig. \ref{fig:2}d) showed that as $\mu_1$ increased from $0$ to $1$, the steady-state population of sensitive cells decreased steadily, while resistant cells remained suppressed only when $\mu_1$ was less than $0.5$. Near the critical threshold of $\mu_1 \approx 0.50$, resistant clones began to escape suppression, resulting in a relapse. Beyond this point, further increases in $\mu_1$ did not reduce the total tumor burden---the tumor size after treatment could exceed its pre-treatment level due to the dominance of highly resistant cells. This counterintuitive behavior aligns with experimental and theoretical studies on evolutionary therapy, demonstrating that overly aggressive treatment can accelerate the evolution of resistance by eliminating competitive constraints. 

To overcome this limitation, we next implemented adaptive therapy (AT)---a treatment strategy that modulates drug administration based on tumor burden feedback. The principle of AT is to maintain a dynamic equilibrium between sensitive and resistant populations rather than attempting complete eradication. Treatment begins when the total tumor cell number reaches an initial burden $c_0 = 10^5$, and is paused once it decreases to $\alpha c_0$ ($\alpha < 1$). When the tumor regrows to $c_0$, therapy resumes. This on-off control rule is mathematically represented as:
\begin{equation}
\mu_1=
\begin{cases}
0.6, & \text{when}\ c(t) = c_0,\\
0, & \text{when}\ c(t)=\alpha c_0,
\label{eq:26}
\end{cases}
\end{equation}
where the parameter $\alpha$ governs the threshold of drug withdrawal.

\begin{figure}[htbp]
\centering
\includegraphics[width=12cm]{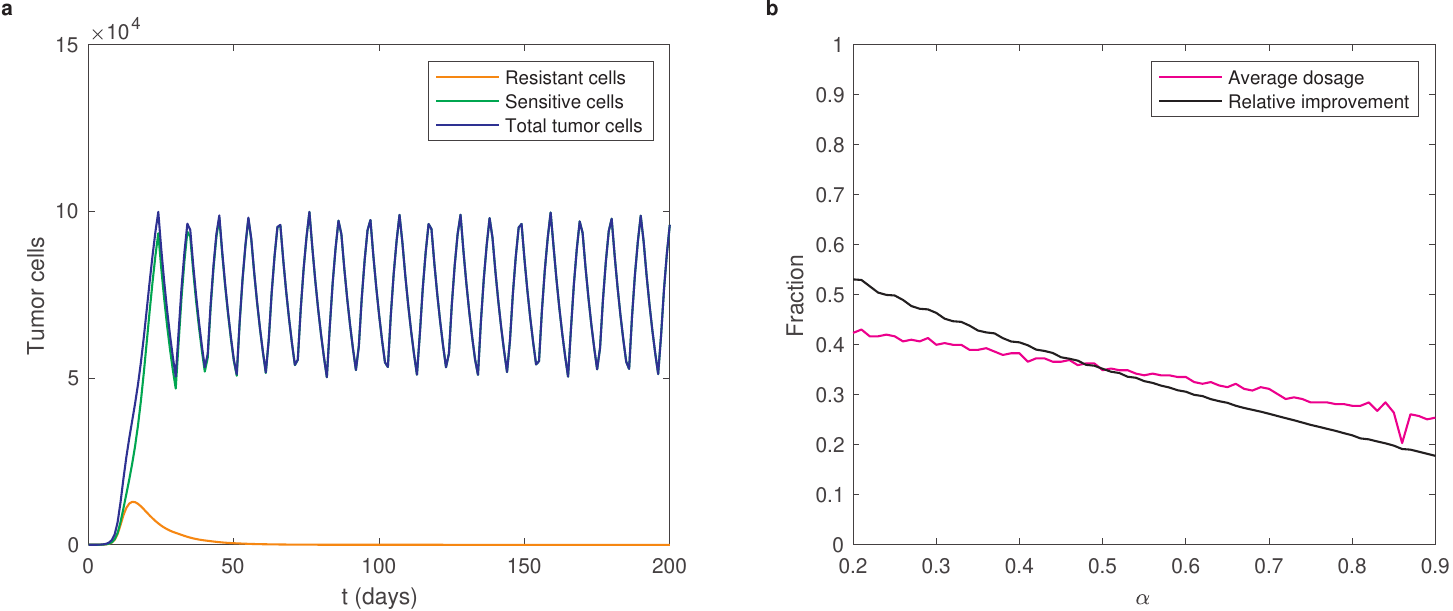}
\caption{Tumor dynamics under adaptive therapy in the competition model. (a) Temporal evolution of total tumor cells under adaptive therapy with $c_0 = 10^5$, $\alpha = 0.5$, and $\mu_1 = 0.6$. (b) Dependence of average dosage and the relative improvement of adaptive therapy on the control parameter $\alpha$.}
\label{fig:3}
\end{figure}
Under adaptive therapy, the tumor burden oscillates between $\alpha c_0$ and $c_0$, maintaining a stable cycle of regression and regrowth (Fig. \ref{fig:3}a). During the ``off'' phase, sensitive cells recover and competitively suppress resistant clones, thereby preventing their fixation. This dynamic equilibrium effectively prolongs the time to resistance while using a lower cumulative drug exposure. 

To quantitatively evaluate the efficacy of AT, we introduced two indices:
\begin{enumerate}
\item \textbf{Average dosage}, which measures the long-term mean drug application intensity:
\begin{equation}
\mbox{Average dosage} =  \lim_{T\to\infty}\dfrac{1}{T} \int_{t_0}^{t_0+T} \mu_1(t)\mathrm{d} t;
\end{equation}
and 
\item \textbf{Relative improvement}, which compares tumor control efficiency between AT and CT over an extended period:
\begin{equation}
\label{eq:ei}
\mbox{Relative improvement} = 1- \lim_{T\to\infty}\dfrac{\int_{t_0}^{t_0 + T} c_{AT}(t) \mathrm{d}t}{\int_{t_0}^{t_0 + T} c_{CT}(t) \mathrm{d} t}.
\end{equation}
\end{enumerate}
Here, $t_0$ represents the onset of treatment, and $c_{AT}(t)$ and $c_{CT}(t)$ denote the total tumor burden under adaptive and continuous therapy, respectively.

As illustrated in Fig. \ref{fig:3}b, the average dosage of the treatment decreases as the parameter $\alpha$ increases, while the relative improvement rises. Biologically, this indicates a trade-off between treatment effectiveness and drug toxicity. Lower values of $\alpha$ lead to more aggressive control cycles, which result in better tumor suppression but require a higher cumulative dosage. Therefore, it is essential to optimize $\alpha$ to find the right balance between treatment efficacy and sustainability.

The results indicate that the classical competition model effectively captures the essential evolutionary features of adaptive therapy, including dynamic control of tumor burden, maintenance of sensitive cell clones, and suppression of resistant outgrowth through competitive inhibition. However, this model assumes static phenotypes and does not account for phenotypic plasticity, a critical mechanism by which tumor cells adjust to environmental stress and temporarily acquire drug tolerance. The limitations of the fixed-trait approach highlight the need for a framework that incorporates plasticity to more accurately reflect the observed persistence of resistance and the variability in therapeutic responses in clinical settings.

\subsection{Integrating the effect of cell plasticity on tumor dynamics} 
\label{sec3.2}

To further explore how tumor cell plasticity regulates population dynamics and therapeutic response, we employed the plasticity-enabled model \eqref{eq:dio}, which characterizes the continuous evolution of drug susceptibility through an epigenetic inheritance mechanism. In this framework, each tumor cell possesses a continuously distributed drug-sensitivity index, and transitions between phenotypes occur stochastically during proliferation according to the inheritance probability function. This formulation allows for phenotypic drift, cells can reversibly transition between drug-sensitive and drug-resistant states. This phenotypic drift reflects the dynamic adaptability of tumor populations under treatment stress. 

We conducted numerical simulations under three therapeutic scenarios: without therapy, continuous therapy (CT), and adaptive therapy (AT).

As in the classical competition model, drug-sensitive cells exhibit a proliferative advantage in the absence of treatment, and drug-resistant cells are suppressed through ecological competition (Fig. \ref{fig:4}a). However, in contrast to the ODE model, resistant cells in the plasticity framework do not go extinct; instead, they persist at a low but stable fraction. This steady coexistence results from reversible phenotypic transitions between sensitive and resistant states, ensuring that a small resistant subpopulation is continuously replenished even in the absence of drug pressure.

\begin{figure}[htbp]
\centering
\includegraphics[width=12cm]{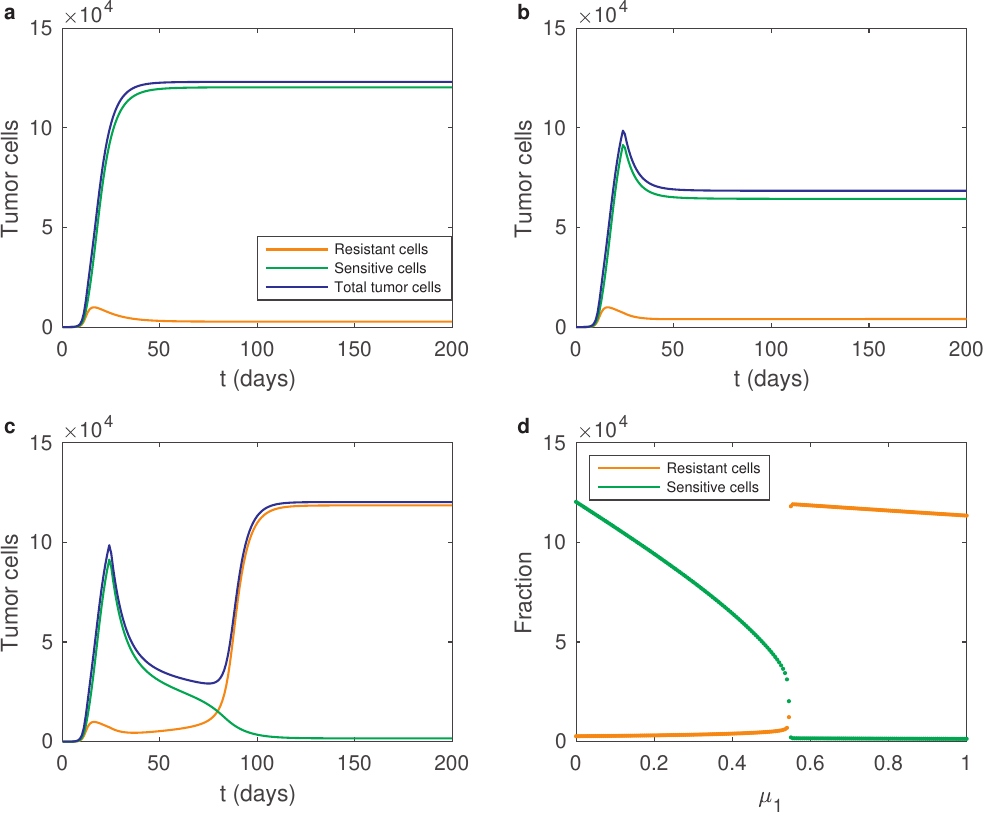}
\caption{Evolution of tumor cell dynamics in the plasticity-enabled model. (a) Changes in cell number without therapy. (b) Changes in cell number under continuous therapy with $\mu_1 = 0.4$. (c) Changes in cell number under continuous therapy with $\mu_1 = 0.6$. (d) Steady-state populations of drug-sensitive and resistant cells after continuous therapy with $\mu_1$ varying from $0$ to $1$.}
\label{fig:4}
\end{figure}

Under continuous therapy, the system exhibits dose-dependent evolutionary dynamics that are qualitatively similar to those of the ODE model. For moderate drug efficacy ($\mu_1 = 0.4$), tumor growth remains dominated by sensitive clones (Fig. \ref{fig:4}b). As the drug efficacy increases ($\mu_1 = 0.6$), sensitive cells are rapidly depleted, followed by the expansion of resistant clones through competitive release (Fig. \ref{fig:4}c). However, the regrowth phase occurs more gradually than in the non-plastic model, indicating that plasticity smoothens the evolutionary transition between sensitive and resistant phenotypes. 

A systematic dose-response analysis (Fig. \ref{fig:4}d) shows that as $\mu_1$ increases from $0$ to $1$, the steady-state fraction of sensitive cells declines steadily, while resistant cells gradually become dominant when $\mu_1$ is greater than $0.53$. Notably, two key distinctions emerge when compared with the ODE model (Fig. \ref{fig:2}d):
\begin{enumerate}
\item[(1)] Even at low $\mu_1$, a small resistant subpopulation persists, suggesting that plasticity preserves latent resistance within the tumor, which can serve as a seed for future relapse.
\item[(2)] At higher $\mu_1$, the overall tumor burden decreases as drug efficacy increases. This behavior differs from the plateau observed in the ODE model. Phenotypic plasticity allows sensitive cells to temporarily transition into less proliferative, drug-tolerant states rather than being completely eliminated.   
\end{enumerate}

We next examined tumor evolution under adaptive therapy. Similar to the competitive ODE model, the total tumor burden oscillated periodically between $\alpha c_0$ and $c_0$ following treatment cycles (Fig. \ref{fig:5}a). However, in the plasticity-enabled model, the oscillations exhibit longer cycling periods and smoother transitions between treatment phases. This stabilization arises because phenotypic interconversion buffers population dynamics, preventing abrupt shifts in dominance between sensitive and resistant cells.

To quantitatively compare treatment performance, we calculated the two previously defined indices---average dosage and relative improvement---for the plasticity model. As shown in Fig. \ref{fig:5}b-c, the average dosage decreases as $\alpha$ increases, consistent with the ODE model, but the overall drug usage is systematically higher in the presence of plasticity. In contrast, the relative improvement (the ratio of the relative tumor burden reduction under AT to that under CT therapy) is lower when plasticity is included. Together, these results suggest that cell plasticity reduces the efficacy of adaptive therapy, necessitating higher cumulative drug exposure to maintain tumor control and resulting in smaller relative benefits compared to continuous therapy.

\begin{figure}[htbp]
\centering
\includegraphics[width=12cm]{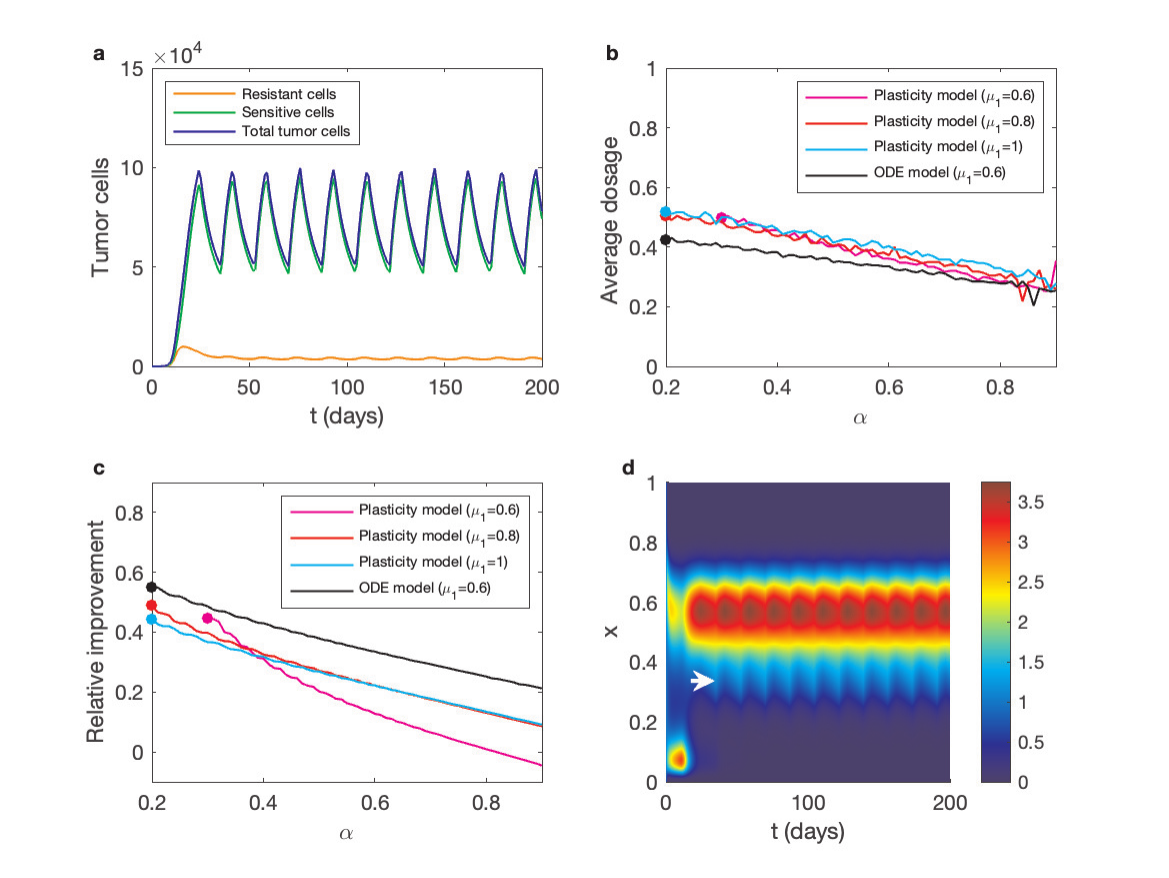}
\caption{Tumor dynamics under adaptive therapy in the plasticity-enabled model. (a) Temporal evolution of total tumor cells under adaptive therapy with $c_0 = 10^5$, $\alpha = 0.5$, and $\mu_1 = 0.6$. (b) Dependence of average dosage on the control parameter $\alpha$. (c) Dependence of relative improvement on $\alpha$. (d) Evolution of the sensitivity density function $f(t, x)$.}
\label{fig:5}
\end{figure}

To further illustrate the mechanistic basis of these differences, we examined the temporal evolution of the sensitivity density function
$$
f(t, x) = \dfrac{C(t, x)}{\int_0^1 C(t, x) \mathrm{d} x},
$$  
which represents the normalized distribution of drug susceptibilities across the tumor population (Fig. \ref{fig:5}d). During the early treatment phase, the fraction of highly sensitive cells ($x > 0.3$) declines sharply due to drug-induced killing. Concurrently, we observed a progressive shift of the distribution toward lower sensitivity values (indicated by the white arrow), reflecting the phenotypic drift from sensitive to tolerant states. Importantly, after treatment cessation, this distribution partially reverts toward higher sensitivity, confirming that drug tolerance in this model is reversible and dynamic rather than fixed.

Biologically, these findings suggest that tumor plasticity introduces a dynamic reservoir of transiently resistant cells that are continually replenished during therapy, thereby sustaining a residual tumor population even under aggressive dosing. This mechanism explains the observed quantitative differences in average dosage and relative improvement: plasticity demands more frequent or prolonged dosing to suppress regrowth, yet yields less overall therapeutic gain relative to CT. In contrast to purely genetic resistance captured by the ODE model, the plasticity-based framework highlights non-genetic adaptability as a fundamental barrier to complete tumor eradication under adaptive therapy. 

\subsection{Regulatory role of phenotypic plasticity in tumor evolution dynamics}
\label{sec3.3}

In the plasticity-enabled model \eqref{eq:4}, the probability inheritance function $p(x,y)$ governs the transmission of phenotypic traits during cell division and quantitatively describes the degree of tumor cell plasticity. The key parameter $\eta$ determines the width of the inheritance distribution: a larger $\eta$ corresponds to tighter phenotypic inheritance and lower plasticity, whereas a smaller $\eta$ produces broader variability and higher plasticity, enabling cancer cells to transition more readily between drug-sensitive and drug-resistant states. Biologically, high plasticity reflects epigenetic adaptability, transcriptional reprogramming, and stress-induced state switching---mechanisms widely implicated in the emergence of drug-tolerant persister cells and adaptive resistance.

To evaluate how phenotypic plasticity alters therapeutic outcomes, we varied $\eta$ and compared tumor progression under continuous therapy (CT) and adaptive therapy (AT).

\subsubsection*{Continuous therapy: plasticity accelerates relapse by supplying resistant phenotypes}

Under CT, decreasing $\eta$ (i.e., increasing plasticity) markedly accelerates relapse (Fig. \ref{fig:6}a). When plasticity is low ($\eta = 80$), therapy induces a steep initial decline in tumor burden, followed by an extended remission phase (day 50-100). During this window, the sensitivity density remains sharply concentrated (Fig. \ref{fig:6}b), indicating limited phenotypic switching and allowing a residual fraction of sensitive cells to maintain competitive suppression of resistant clones. Relapse occurs only when this balance is disrupted.

In contrast, when plasticity is high ($\eta = 40$ or $\eta = 60$), relapse occurs substantially earlier (Fig. \ref{fig:6}a). The sensitivity distribution exhibits significantly broader variance (Fig. \ref{fig:6}b), indicating rapid diversification of phenotypes and accelerated transitions from sensitive to resistant states. Drug pressure further amplifies these transitions by preferentially eliminating sensitive cells, allowing plasticity-generated resistant phenotypes to expand quickly.

These findings align with experimental studies showing that highly plastic tumors---such as melanoma \cite{Menon:2015hd}, glioblastoma \cite{Liau:2017aa}, and EGFR-mutant lung cancer \cite{Sharma:2009ge}---rapidly adopt drug-tolerant states under therapy. Even in the absence of pre-existing genetically resistant clones, dynamic reprogramming can sustain a persistent pool of therapy-evading cells, consistent with our simulation results.

\begin{figure}[htbp]
\centering
\includegraphics[width=12cm]{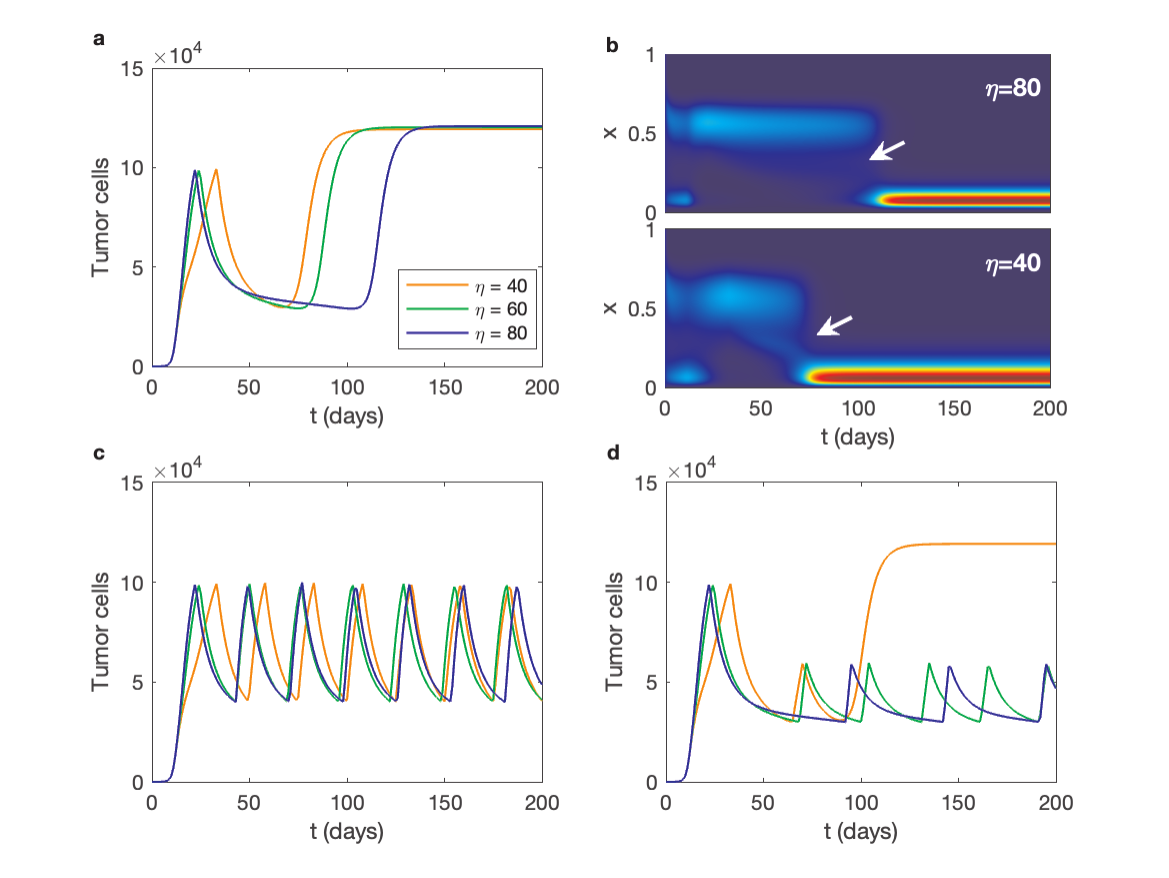}
\caption{Effects of phenotypic plasticity on tumor evolution dynamics. (a) Temporal evolution of total tumor cells under continuous therapy with $\eta = 40$, $60$, and $80$ ($\mu_1 = 0.6$). (b) Evolution of the sensitivity density function under continuous therapy for $\eta = 80$ and $\eta = 40$. (c) Temporal evolution of total tumor cells under adaptive therapy with $\eta = 40$, $60$, and $80$ ($\mu_1 = 0.6$, $\alpha = 0.4$). (d) Tumor dynamics under adaptive therapy with an early treatment protocol. The first treatment begins when $c(t) = 1.0 \times 10^5$ and pauses when $c(t) = 0.3 \times 10^5$. From the second cycle onward, treatment resumes once $c(t)$ rises to $0.60 \times 10^5$ and is suspended again when $c(t)$ falls to $0.3 \times 10^5$. This on-off cycle is repeated throughout the simulation. }
\label{fig:6}
\end{figure}

\subsubsection*{Adaptive therapy: Plasticity shortens treatment cycles while preserving overall control}

Under AT, phenotypic plasticity exerts similar regulatory effects (Fig. \ref{fig:6}c). High-plasticity tumors (lower $\eta$) exhibit more rapid tumor reductions during drug exposure and thus reach the treatment-off threshold earlier. Consequently, AT cycles become shorter, and the average dosage decreases.

Notably, however, the relative improvement of AT over CT is largely insensitive to $\eta$. This indicates that although plasticity modifies the temporal pattern of AT (cycle length, drug exposure duration), it does not fundamentally impair its capacity to maintain long-term tumor control. From a clinical perspective, these results suggest the following implications. High-plasticity tumors require shorter but more frequent AT cycles, reflecting their rapid phenotypic adaptation. Conversely, low-plasticity tumors tolerate longer treatment holidays, consistent with slower transitions towards resistant phenotypes. In both cases, AT effectively suppresses resistant expansion by leveraging competitive interactions with sensitive cells, provided treatment timing is properly managed.

\subsubsection*{Sensitivity to timing: Mistimed retreatment can result in resistance emergence}

A defining feature of AT is the dependence on sensitive-cell recovery during drug holidays. These cells exert suppressive pressure on resistant clones; insufficient recovery weakens this competition and increases the likelihood of rapid resistance expansion.

To test this vulnerability, we simulated an ``early restart'' regimen where therapy resumed prematurely once tumor burden reached $0.60\times 10^5$ (Fig. \ref{fig:6}d). Low-plasticity tumors ($\eta = 80$) remained relatively stable under this misaligned schedule, as limited phenotype switching preserved the sensitive-cell fraction and maintained competitive restraint. However, high-plasticity tumors ($\eta = 40,\ 60$) rapidly fail under the same conditions: early retreatment truncated sensitive-cell recovery, allowing resistant phenotypes---continuously generated through plastic switching---to accumulate over successive cycles, ultimately leading to therapy breakdown.

This result highlights a key limitation of AT: its success depends critically on allowing sufficient regrowth of sensitive cells. Intrinsic tumor plasticity narrows this therapeutic window, and suboptimal scheduling can inadvertently accelerate resistance evolution. These insights reinforce the clinical importance of personalized AT protocols and real-time monitoring to avoid mistimed dosing cycles.

Together, these results demonstrate that phenotypic plasticity profoundly shapes tumor evolutionary trajectories under both CT and AT. Plasticity accelerates relapse under CT, alters dosing requirements under AT, and can compromise treatment durability when cycle timing is suboptimal. Importantly, these effects imply substantial interpatient variability, as tumors with different plasticity levels respond differently to identical treatment protocols.

\subsection{Dynamic response to treatment by tumor cell phenotypic plasticity and population heterogeneity}
\label{sec3.4}

To quantify how inter-patient heterogeneity influences therapeutic outcomes, we constructed a large cohort of virtual patients by sampling key biological and treatment-related parameters across physiologically plausible ranges. To focus our discussions on heterogeneity in tumor growth and phenotypic plasticity, four tumor-intrinsic parameters were selected to represent variations in tumor growth, phenotypic flexibility, and drug susceptibility: the differentiation rate $\kappa$, the proliferation rate $\beta_0$, the phenotypic plasticity parameter $\eta$, and the drug-induced apoptosis rate $\mu_1$. For each virtual patient, these parameters were independently drawn from the following ranges:
$$
0.002 < \kappa < 0.007,\ 0.04 < \beta_0 < 0.12,\ 30 < \eta < 100,\ 0.5 < \mu_1 < 1.
$$ 
Other parameters are the same as those listed in Table \ref{table:1}. 

A total of $10,000$ parameter sets were generated, each representing one virtual patient. To mimic realistic clinical variability, the treatment-initiation time $t_0$ was also sampled uniformly from $20 < t_0 <30$. For each patient, therapy was initiated at the earliest of the sampled time $t_0$ or the moment when total tumor burden first exceeded $10^5$. The tumor size at treatment onset was recorded as $c_0$.

\subsubsection*{Heterogeneous outcomes under continuous therapy (CT)}

Simulation of all $10,000$ virtual patients revealed four qualitatively distinct tumor-evolution patterns under continuous therapy (Fig. \ref{fig:7}a):
\begin{enumerate}
\item[(1)] \textbf{Type I: Relapse with high-level resistance (62.6\%).}\ Tumor burden initially falls below $c_0/2$ but subsequently rebounds to levels exceeding the pre-treatment baseline. These cases represent the canonical pattern of initial remission followed by resistance-driven relapse, driven either by pre-existing resistance or plasticity-enabled phenotype switching.
\item[(2)] \textbf{Type II: Non-remission (19.5\%).}\ Tumors fail to decline below $c_0/2$ after therapy begins and eventually regrow beyond the pre-treatment level. High proliferative fitness or weak drug susceptibility prevents effective cytoreduction, rendering adaptive therapy infeasible.
\item[(3)] \textbf{Type III: Sustained control without relapse (12.8\%).}\ Tumors decrease after therapy onset and remain below baseline burden, representing cases where continuous therapy alone is sufficient to maintain long-term control.
\item[(4)] \textbf{Type IV: No-response (5.1\%).}\ Tumor burden does not decrease at all, suggesting intrinsic tolerance to the therapeutic regimen. 
\end{enumerate} 
To systematically define these categories, we introduced two quantitative indices: the short-term remission index $I_0$, defined as the minimum tumor burden normalized by $c_0$, and the long-term response index $I_1$, defined as the tumor burden at the end of simulation, also normalized by $c_0$. The classification scheme based on the $(I_0, I_1)$ plane is illustrated in Fig. \ref{fig:7}b.

\begin{figure}[htbp]
\centering
\includegraphics[width=12cm]{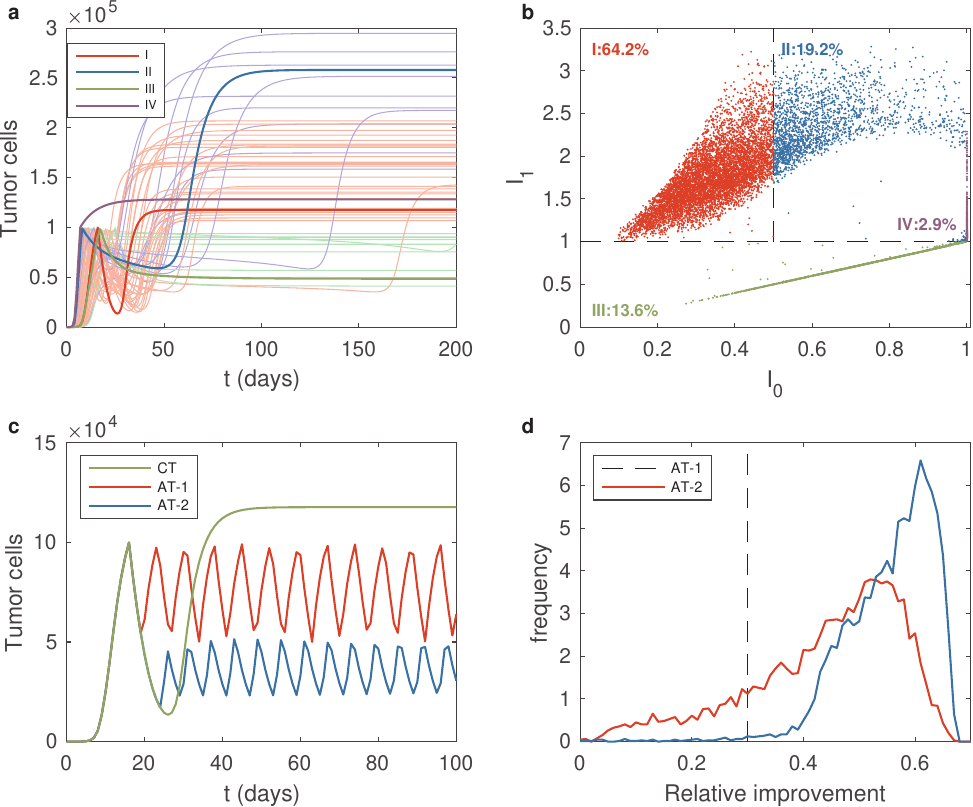}
\caption{Simulation of a virtual patient cohort reveals heterogeneous treatment responses. (a) Tumor evolution dynamics under continuous therapy (CT), with each color representing one of the four different types identified. (b) Classification of four tumor-evolution types using the indices $I_0$ (short-term remission) and $I_1$ (long-term response); numbers indicate percentages among $10,000$ virtual patients. (c) Representative tumor dynamics under the CT and adaptive therapy strategies AT-1 and AT-2 for the same virtual patient. (d) Distributions of relative improvement for adaptive therapy strategies AT-1 and AT-2 compared with continuous therapy. }
\label{fig:7}
\end{figure}

Type II, III, and IV patients exhibit limited clinical flexibility. Type II patients never reach the stopping threshold and are thus ineligible for adaptive therapy; type III patients already achieve durable control under CT, leaving no room for further improvement; type IV patients do not respond to therapy, indicating that some other regimens are required. Therefore, type I patients constitute the clinically relevant population for evaluating adaptive therapy because they exhibit both initial responsiveness and eventual relapse---the exact scenario for which AT was originally designed.

\subsubsection*{Performance of tumor-burden-based adaptive therapy (AT-1)}

For the clinically relevant Type I patients, we first evaluated a tumor-burden-based adaptive therapy (AT-1). Treatment was paused when tumor burden declined to $0.5 c_0$ and resumed once it regrew to $c_0$. This protocol is analogous to the original adaptive therapy proposed by Gatenby et al. \cite{Zhang:2022aa}, relying entirely on total tumor burden as the feedback signal.

Figure \ref{fig:7}c illustrates a representative comparison between AT-1 and CT. For each virtual patient, we computed the long-term relative improvement of AT-1 over CT, defined in \eqref{eq:ei}. The distribution of this improvement across all Type I patients is shown in Fig. \ref{fig:7}d.

Under AT-1, relative improvement ranges from $0$ to $0.67$, with a mean of $0.44$. Thus, most Type I patients benefit substantially from adaptive therapy. This improvement arises because treatment holidays maintain a pool of sensitive cells that competitively suppress resistant clones, delaying relapse and reducing overall tumor burden. However, the distribution also reveals significant heterogeneity: approximately $16.2\%$ of Type I patients exhibit improvement below 0.3, underscoring that a uniform AT strategy may be suboptimal for some patients.

\subsubsection*{Phenotype-guided adaptive therapy (AT-2) improves robustness across patients}

Building on the observation that phenotypic composition is a major driver of relapse, we evaluated a composition-based adaptive therapy (AT-2) after the initial treatment. The treatment rules were: stop therapy when resistant cells exceeded $20\%$ of the tumor population, and resume therapy when resistant cells fell below $10\%$. Figure \ref{fig:7}c shows a tumor evolution trajectory following the AT-2 strategy. This protocol requires real-time measurement of subpopulation composition---potentially achievable through tumor DNA, single-cell profiling, or phenotypic biomarkers.

Unlike AT-1, which manipulates resistance indirectly through the total burden, AT-2 directly targets resistance dynamics, intervening precisely when resistant cells begin to expand. Simulation results show clear benefits (Fig. \ref{fig:7}d). The overall distribution of relative improvement shifts upward, with the population-level mean increasing to $0.53$. Importantly, the fraction of patients with relative improvement below $0.3$ drops from $16.2\%$ (AT-1) to $1.2\%$ (AT-2); patients who benefited modestly under AT-1 show substantial gains under AT-2. Thus, AT-2 provides more consistent and superior outcomes across heterogeneous patient profiles, particularly by preventing resistant clones from crossing critical thresholds that lead to relapse.

\subsubsection*{Biological interpretation and implications}

Virtual patient analysis reveals the following key principles: phenotypic plasticity and proliferation heterogeneity can drive divergent outcomes under identical protocols. Tumor burden-guided therapy (AT-1) is beneficial but inherently limited, as its effectiveness depends on patient-specific, evolving tumor composition. In contrast, directly regulating the resistant fraction (AT-2) offers a stronger biological rationale—by intervening precisely when resistant populations rise, it aligns with evolutionary principles to prevent resistance dominance and minimize unnecessary drug exposure.

Together, these virtual-patient simulations underscore the importance of individualized treatment strategies that account for intrinsic tumor heterogeneity and phenotypic plasticity. They further demonstrate that phenotype-informed adaptive therapy offers more consistent and durable tumor control across heterogeneous patient populations, highlighting its potential to improve clinical outcomes beyond traditional burden-based approaches.

\subsection{Impact of cell-division time delay on tumor evolution dynamics}

In the mathematical formulation of the plasticity-enabled tumor model, cell proliferation was originally modeled as a delayed process to account for the finite duration of the cell cycle. Biologically, this delay reflects DNA replication, checkpoint regulation, and mitotic completion---processes that introduce an intrinsic time lag between commitment to division and the emergence of daughter cells. Although the full model incorporates this delay, previous simulations omitted it for simplicity, aligning with classical ODE-based models of tumor growth.

To evaluate whether ignoring this proliferation delay affects the qualitative or quantitative conclusions of our study, we reintroduced the proliferation delay and compared tumor evolution with and without delay across four scenarios: without therapy, continuous therapy (CT), tumor-burden-based adaptive therapy (AT-1), and phenotype-guided adaptive therapy (AT-2). Results are summarized in Fig. \ref{fig:8}.

\begin{figure}[htbp]
\centering
\includegraphics[width=12cm]{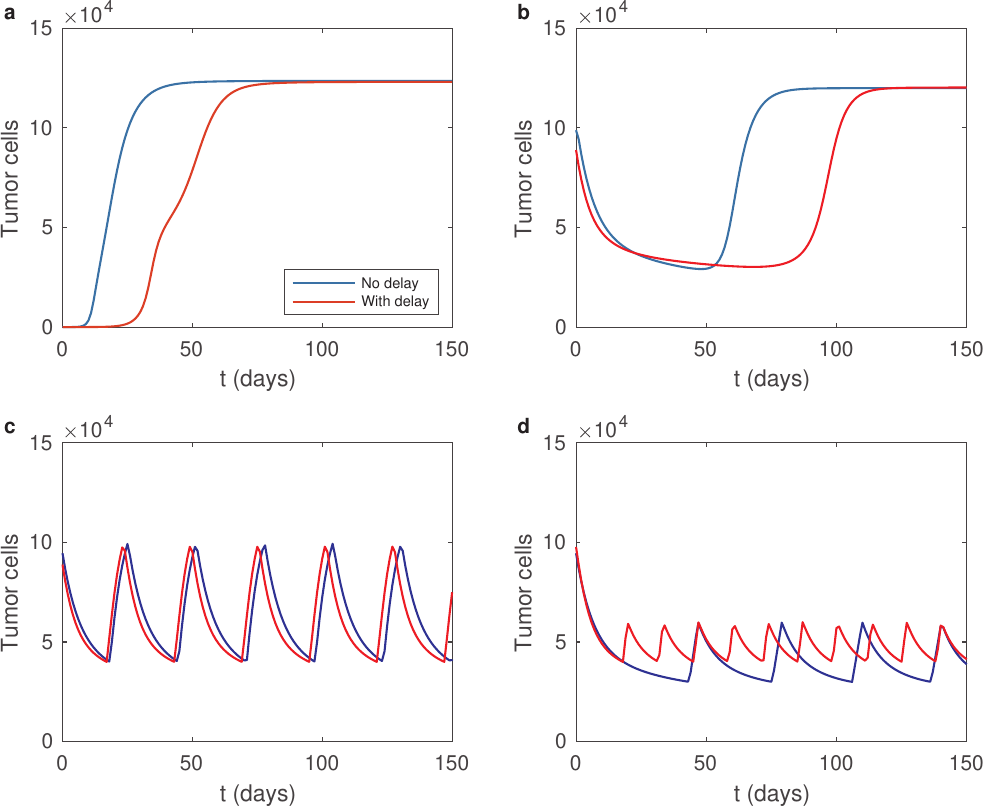}
\caption{Tumor dynamics with and without mitotic delay under different treatment strategies. (a) Without therapy. (b) Continuous therapy (CT).  (c) Tumor-burden-based adaptive therapy (AT-1). (d) Phenotype-guided adaptive therapy (AT-2). The simulations with delay were obtained by solving the integro-differential equation \eqref{eq:4}. We set $\tau = 24\ \mathrm{h}$. Accordingly, reset $\mu_0$ and $\mu_1$ as $\mu_0/24$ and $\mu_1/24$, respectively. Other parameters were the same as Fig. \ref{fig:6}. In (b)-(d), $t = 0$ corresponds to the time of starting the therapy.}
\label{fig:8}
\end{figure}

Figure \ref{fig:8}a compares uncontrolled tumor growth trajectories. Incorporating a proliferation delay slows early tumor expansion because cell division is delayed by the duration of the cell cycle. However, once the population enters a high-density regime, competition-driven dynamics dominate, and the two models converge to nearly identical long-term tumor burdens. Thus, while delay modifies transient growth kinetics, it does not affect the eventual tumor load.

Under continuous therapy (Fig. \ref{fig:8}b), both models exhibit an initial sharp decline in tumor burden---reflecting elimination of sensitive cells---followed by recurrence driven by the expansion of resistant phenotypes. Importantly, the presence of delay does not alter this qualitative behavior: relapse remains inevitable once resistant clones dominate. The primary effect is temporal---relapse occurs later when delay is included because resistant cells re-emerge more slowly. This extends the remission phase without altering the underlying evolutionary mechanism. 

For adaptive therapy strategies (AT-1 and AT-2), the tumor burden oscillates predictably between the predefined thresholds (Fig. \ref{fig:8}c-d). Introducing delay again preserves the qualitative structure of dynamics. The main difference is that treatment cycles may lengthen slightly, as delayed proliferation slows both tumor growth during holidays and sensitive-cell depletion during drug exposure. Nonetheless, the superiority of AT-2 over AT-1 and both over CT remains unchanged.

Overall, these results indicate that incorporating a biologically realistic proliferation delay only alters the timing of tumor dynamics, not the core mechanisms governing evolution, resistance, or treatment response. Thus, the no-delay formulation used in the main analysis captures the system's essential features, and the study's key conclusions remain robust.

\section{Discussion and Conclusions}
\label{sec4}

Adaptive therapy (AT), introduced by Gatenby et al. \cite{Gatenby_2009}, is an evolution-informed strategy aimed at controlling tumors rather than eradicating them. By maintaining drug-sensitive cells through dynamic dosing, AT reduces selective pressure that, under maximum tolerated dose (MTD), favors resistant clones \cite{Liu_2022, Zhang_2017, Kim_2021}. However, most mathematical models of AT rely on a binary sensitive-resistant framework and overlook phenotypic plasticity—the reversible switching between drug-sensitive and resistant states increasingly implicated in treatment failure. These binary ODE models cannot capture plasticity-driven resistance re-emergence or the persistence of low-level tolerant populations observed clinically.

In this work, we developed an integro-differential tumor evolution model representing drug sensitivity as a continuous phenotype, with inheritance governed by a probabilistic kernel $p(x, y)$. This formulation captures a continuum of phenotypes, enabling resistance to emerge through gradual state transitions rather than discrete clonal replacement. The framework integrates heterogeneity, plasticity, and therapeutic selection, bridging classical ODE models with more realistic evolutionary dynamics. Simulations reveal that plasticity maintains a persistent pool of low-sensitivity cells even in the absence of therapy, accounting for minimal residual disease. Under treatment, plasticity generates resistant phenotypes de novo, fundamentally altering evolutionary trajectories and making the sensitive-resistant balance a dynamic, reversible process shaped by treatment timing.

Our virtual patient analysis reveals that inter-patient variability in key parameters can lead to divergent outcomes under identical protocols. Tumor burden-guided adaptive therapy (AT-1) offers advantages but remains limited by its reliance on individualized, dynamic tumor composition. In contrast, phenotype-guided adaptive therapy (AT-2) provides a stronger biological rationale by directly intervening when resistant populations expand, aligning with evolutionary principles to prevent resistance dominance and reduce drug exposure. Virtual simulations demonstrate that personalized, phenotype-based strategies achieve more durable and stable tumor control across heterogeneous patient populations than traditional burden-guided approaches.

Despite its promising insights, this study has several limitations. First, while reintroducing the proliferation delay did not alter qualitative outcomes, the simplified no-delay model cannot capture scenarios in which cell-cycle regulation is dominant. Second, the framework currently considers monotherapy only; multi-drug adaptive protocols may further suppress resistance \cite{Asowed_2023}. Third, immune interactions—such as cytotoxic killing and immune exhaustion—were not included but represent a critical layer for future extensions. Finally, phenotype-guided adaptive therapy (AT-2) faces significant challenges in practical application, as existing technologies struggle to accurately distinguish resistant from sensitive clones, especially in cases of non-genetic resistance. As a result, tumor burden-guided strategies (AT-1) are generally more feasible. While emerging tools such as single-cell analysis and molecular imaging hold promise for real-time monitoring of resistance, they still require clinical validation.

Future research will expand this framework in several ways. First, incorporating temporal delays and cell cycle structures could enhance the accuracy of transient prediction. Second, while the current model assumes that the inheritance function $p(x,y)$ remains unchanged during therapy, future work should introduce treatment-intensity- or time-dependent mechanisms (e.g., drug-concentration functions) to better characterize the evolution of plasticity affected by treatment. Additionally, the proposed model should be extended to include multiple cancer cell types to assess cross-subpopulation evolution. Furthermore, integrating multi-drug therapies with tumor-immune interaction modules will help explore the synergistic effects between evolving therapies and immune modulation. Finally, achieving patient-specific calibration by correlating with multi-omics data (including single-cell sequencing, spatial transcriptomics, and epigenetic markers) will be crucial for advanced clinical translation.

In summary, this study identifies phenotypic plasticity as a key determinant of the efficacy of adaptive therapy. Our model, incorporating continuous sensitivity variation and phenotype inheritance, captures clinically relevant dynamics beyond those captured by classical ODE approaches. The findings demonstrate that adaptive therapy, when viewed through the lens of heterogeneity and plasticity, can transform biological variability from an obstacle into a controllable resource. This plasticity-enabled framework offers mechanistic insights and quantitative foundations for designing robust, personalized, evolution-informed treatment strategies in oncology.

\section*{Acknowledgments}

This work was funded by the National Natural Science Foundation of China (NSFC 12331018).

\bibliographystyle{unsrt} 
{\footnotesize \bibliography{myBib}}

\end{document}